\newtheorem{lemma}{Lemma}
\newtheorem{theorem}{Theorem}
\newtheorem{corollary}{Corollary}
\newtheorem{proposition}{Proposition}
\theoremstyle{definition}
\newtheorem{assumption}{Assumption}
\DeclareMathOperator{\tr}{tr}
\DeclareMathOperator{\re}{Re}
\DeclareMathOperator{\poly}{poly}
\begin{document}

\title{Finite-size scaling of out-of-time-ordered correlators at late times}

\begin{CJK*}{UTF8}{}

\CJKfamily{gbsn}

\author{Yichen Huang (黄溢辰), Fernando G.S.L. Brand\~ao, and Yong-Liang Zhang (张涌良)\\
Institute for Quantum Information and Matter, California Institute of Technology\\
Pasadena, California 91125, USA\\
yichen.huang@microsoft.com, fbrandao@caltech.edu, ylzhang@caltech.edu
}

\maketitle

\end{CJK*}

\begin{abstract}

Chaotic dynamics in quantum many-body systems scrambles local information so that at late times it can no longer be accessed locally. This is reflected quantitatively in the out-of-time-ordered correlator of local operators, which is expected to decay to zero with time. However, for systems of finite size, out-of-time-ordered correlators do not decay exactly to zero and in this paper we show that the residual value can provide useful insights into the chaotic dynamics. When energy is conserved, the late-time saturation value of the out-of-time-ordered correlator of generic traceless local operators scales as an inverse polynomial in the system size. This is in contrast to the inverse exponential scaling expected for chaotic dynamics without energy conservation. We provide both analytical arguments and numerical simulations to support this conclusion.

\end{abstract}

\section{Introduction}

Nonintegrable quantum many-body systems are expected to exhibit chaotic dynamics, which not only leads to thermalization but also scrambles local information into a nonlocal form. In the Heisenberg picture, the support of $A(t):=e^{iHt}Ae^{-iHt}$ for a local operator $A$ should grow with time under the chaotic dynamics. This growth is reflected in the noncommutativity of $A(t)$ and another local operator $B$ at a different site, which leads to the decay of the out-of-time-ordered correlator (OTOC) $\re\langle A^\dag(t)B^\dag A(t)B\rangle$ \cite{LO69, SS14, RSS15, SS15, RS15, HQRS16, MSS16, RS16, SBSH16, MS16, RY16, GQS17, ZHC19, Kit14, Kit15}. Assume for simplicity that $A$ and $B$ are unitary. Then,
\begin{equation} \label{OTOCdef}
\re\langle A^\dag(t)B^\dag A(t)B\rangle=1-\langle[A(t),B]^\dag[A(t),B]\rangle/2
\end{equation}
so that when the commutator $[\cdots]$ grows, OTOC decays. The chaotic nature of the dynamics is reflected in the fast decay of OTOC away from $1$ in a relatively short time period and the approaching of OTOC to $0$ at late times.

Why does chaotic dynamics lead to such decaying behavior of OTOC? While it is not possible to solve exactly the dynamics of nonintegrable systems in general, we might be able to extract some universal features, at least in certain limits. In a large class of chaotic systems without spatial locality (e.g., large-$N$ theories), OTOC at early time $t$ is given by $1-\epsilon e^{\lambda_Lt}$, where $\epsilon$ is a small prefactor and $\lambda_L$ is a constant. Such an exponential deviation from the initial value is reminiscent of the so-called sensitive dependence on initial conditions (i.e., nearby points in phase space separate from each other over time at an exponential rate) in classical chaos. Thus, $\lambda_L$ may be interpreted as the Lyapunov exponent for quantum systems \cite{Kit14}. In chaotic systems with spatial locality, OTOC of two local operators starts to decay only after a delay that is proportional to the distance between the operators \cite{RSS15, HQRS16, RS16, GQS17, ZHC19}. This is a consequence of the Lieb-Robinson bound \cite{LR72, NS06, HK06}.

In this paper, we study the behavior of OTOC at late times. For simplicity, consider a system of $n$ qubits at infinite temperature so that $\langle\cdots\rangle=\tr(\cdots)/2^n$. In the limit $t\to\infty$, a naive understanding of why OTOC approaches $0$ is as follows. We expand the time-evolved operator in the $n$-qubit Pauli basis $\{\sigma_0=I,\sigma_x,\sigma_y,\sigma_z\}^{\otimes n}$:
\begin{equation} \label{At}
A(t)=\sum_{(k_1,k_2,\ldots,k_n)\in\{0,x,y,z\}^n}a_{k_1k_2\cdots k_n}\sigma_{k_1}\sigma_{k_2}\cdots\sigma_{k_n}.
\end{equation}
The unitarity of $A(t)$ implies
\begin{equation}\label{nor}
\sum_{k_1,k_2,\ldots,k_n}|a_{k_1k_2\cdots k_n}|^2=1.
\end{equation}
After undergoing chaotic evolution for a sufficiently long time, the support of $A(t)$ should be the whole system, and one might expect that the coefficients $a_{k_1k_2\cdots k_n}$ behave like random variables due to the chaotic nature of the dynamics. If we choose $B$ to be the Pauli operator $\sigma_x$ of qubit $1$, then half of the terms in the expansion (\ref{At}) of $A(t)$ commute with $B$ and half of them do not. Thus,
\begin{equation} \label{3}
\langle[A(t),B]^\dag[A(t),B]\rangle=4\sum_{k_2,k_3,\ldots,k_n}|a_{yk_2k_3\cdots k_n}|^2+|a_{zk_2k_3\cdots k_n}|^2\approx4\cdot0.5=2.
\end{equation}
The approximation step follows from Eq. (\ref{nor}) and the fact that we sum over half of the random variables. Substituting Eq. (\ref{3}) into Eq. (\ref{OTOCdef}), we see that OTOC approaches $0$ at late times.

Equation (\ref{At}) with random coefficients is a very simple way to approximate $A(t)$ for large $t$ in chaotic systems and it is oversimplified in some respects. For example, one major difference between this approximation and the exact evolution $A(t) = e^{iHt}Ae^{-iHt}$ is that the latter preserves the spectrum of $A$ while the former does not. How does this discrepancy affect our understanding of the late-time behavior of OTOC? Is it necessary to use more refined and sophisticated approximations in order to fully capture the essence of chaotic dynamics at late times?

We focus on the scaling of late-time OTOC with system size. In finite-size systems, OTOC may converge to a small but finite value, which goes to $0$ when the system size goes to infinity. One might expect this residual value to be exponentially small in the system size because we sum over an exponential number of random variables in Eq. (\ref{3}). However, using a more refined approximation we show that the finite-size scaling of generic late-time OTOC should be inverse polynomial. In fact, the power-law scaling is closely related to energy conservation during the time evolution, which is not captured by simply setting the coefficients in the expansion  (\ref{At}) to be random.

\section{Results} \label{res}

In this section, we introduce basic definitions and provide a summary of results.

Throughout this paper, asymptotic notations are used extensively. Let $f,g:\mathbb R^+\to\mathbb R^+$ be two positive functions. One writes $f(x)=O(g(x))$ if and only if there exist positive numbers $M,x_0$ such that $f(x)\le Mg(x)$ for all $x>x_0$; $f(x)=\Omega(g(x))$ if and only if there exist positive numbers $M,x_0$ such that $f(x)\ge Mg(x)$ for all $x>x_0$; $f(x)=\Theta(g(x))$ if and only if there exist positive numbers $M_1,M_2,x_0$ such that $M_1g(x)\le f(x)\le M_2g(x)$ for all $x>x_0$. To simplify the notation, we use a tilde to hide a polylogarithmic factor, e.g., $\tilde O(f(x)):=O(f(x)\poly\log f(x))$.

For concreteness, consider a chain of $n$ qubits or spin-$1/2$'s with total Hilbert space dimension $d=2^n$ governed by a translationally invariant Hamiltonian $H=\sum_{i=1}^nH_i$, where $H_i$ acts on spins $i,i+1$ (nearest-neighbor interaction). While our discussion is based on a one-dimensional spin system, our results do not rely on the dimensionality of the system or the degrees of freedom being spins. A minor modification of our method leads to similar results in other settings, e.g., fermionic systems in higher dimensions. Assume without loss of generality that $\tr H_i=0$ (traceless) and $\|H_i\|\le1$ (bounded operator norm).

Let $A,B,C,D$ be local (not necessarily unitary) operators with unit operator norm. The residual value of late-time OTOC is
\begin{equation} \label{lateOTOC}
{\rm OTOC}_\infty(A,B,C,D):=\lim_{\tau\to\infty}\frac{1}{\tau}\int_0^\tau\mathrm dt\langle AB(t)CD(t)\rangle,
\end{equation}
where $\langle X\rangle:=\frac{1}{d}\tr X$ denotes the expectation value of an operator at infinite temperature.

Let $\{|1\rangle,|2\rangle,\ldots,|d\rangle\}$ be a complete set of eigenstates of $H$ with corresponding energies $E_1\le E_2\le\cdots\le E_d$ in nondescending order. Let $X_{jk}=\langle j|X|k\rangle$ be the matrix element of an operator in the energy eigenbasis. Define
\begin{equation} \label{eigenconn}
\langle A,B,C,D\rangle_j=(AC)_{jj}B_{jj}D_{jj}+A_{jj}C_{jj}(BD)_{jj}-A_{jj}B_{jj}C_{jj}D_{jj}.
\end{equation}

In strongly chaotic systems, we propose the following formula for late-time OTOC:
\begin{equation} \label{spec}
{\rm OTOC}_\infty(A,B,C,D)\approx\frac{1}{d}\sum_j\langle A,B,C,D\rangle_j.
\end{equation}
Based on this formula, we argue for
\begin{itemize}
\item ${\rm OTOC}_\infty(A,B,A^\dag,B^\dag)$ for traceless local operators $A,B$ vanishes in the thermodynamic limit $n\to\infty$.
\item In finite-size systems, OTOC $\langle AB(t)A^\dag B^\dag(t)\rangle$ saturates to $\Theta(1/n)$ if either $A$ or $B$ (or both) has a finite overlap with the Hamiltonian $H$. We not only derive the prefactor hidden in the big-Theta notation, but also provide a (not necessarily tight) upper bound on the remainder:
\begin{equation} \label{theory}
{\rm OTOC}_\infty(A,B,A^\dag,B^\dag)=\frac{\langle AA^\dag\rangle|\langle HB\rangle|^2+\langle BB^\dag\rangle|\langle HA\rangle|^2}{\langle HH_i\rangle n}+\tilde O(n^{-1.5}).
\end{equation}
\end{itemize}

This is our main result. It is an example where certain properties of quantum chaotic systems can be calculated analytically. For comparison, Table \ref{t1} summarizes the finite-size scaling of late-time OTOC of generic traceless local operators for various types of quantum dynamics.

\begin{table}
\centering
\begin{tabular} {ccc} \hline
types of dynamics & late-time OTOC & references \\ \hline
Haar random unitary & $e^{-\Theta(n)}$ & \cite{Kit16, RY16} \\ 
chaotic Hamiltonian dynamics & $1/\poly n$ & this work \\
many-body localization & $\Theta(1)$ & \cite{HZC17, FZSZ17, Che16, SC17, HL17, CZHF17} \\ \hline
\end{tabular}
\caption{Finite-size scaling of generic late-time OTOC for various types of quantum dynamics.} \label{t1}
\end{table}

The remainder of this paper is organized as follows. In Section \ref{special}, assuming a ``generic'' energy spectrum we present a simple derivation of Eq. (\ref{theory}) for the special case where the local operators in OTOC are terms in the Hamiltonian. In Section \ref{genericSpec}, we extend the approach to the general case using the eigenstate thermalization hypothesis (ETH) \cite{Deu91, Sre94, RDO08}. Thus, we give a rigorous proof of Eqs. (\ref{spec}), (\ref{theory}) based on two very mild assumptions for chaotic systems: a generic spectrum and ETH. In Section \ref{randomU}, we propose a heuristic physical picture for our results from the perspective of interpreting chaotic dynamics with random unitaries. We first introduce a previous approach, which takes into account the unitarity of the dynamics by approximating the time evolution operator $e^{-iHt}$ with a random unitary. Unfortunately, this approximation remains too crude, for it still suggests that the residual value of late-time OTOC is exponentially small in the system size. We show that once energy conservation is also taken into account by requiring the random unitary to act within small energy windows, the finite-size scaling of late-time OTOC becomes inverse polynomial. In Section \ref{numerics}, we support our analytical arguments with numerical simulations of a nonintegrable spin chain. The numerical results suggest that the remainder in Eq. (\ref{theory}) can be improved to $O(n^{-2})$.

\section{Special case} \label{special}

In the case where the local operators in OTOC are terms in the Hamiltonian, we give a simple rigorous proof of Eq. (\ref{theory}) assuming only a generic spectrum.

In strongly chaotic systems, one might expect that the energy spectrum satisfies the ``generic'' condition:
\begin{assumption} [generic spectrum; see, e.g., Ref. \cite{Sre99}] \label{generic}
\begin{equation}
E_p+E_r=E_q+E_s\implies((p=q)~{\rm and}~(r=s))~{\rm or}~((p=s)~{\rm and}~(r=q)).
\end{equation}
\end{assumption}

This assumption is necessary in the sense that it rules out certain integrable (e.g., free-fermion) systems.

Writing out the matrix elements,
\begin{equation}
\langle AB(t)CD(t)\rangle=\frac{1}{d}\sum_{p,q,r,s}A_{pq}B_{qr}C_{rs}D_{sp}e^{i(E_q-E_r+E_s-E_p)t}.
\end{equation}
Substituting into Eq. (\ref{lateOTOC}), we obtain
\begin{equation}
{\rm OTOC}_\infty(A,B,C,D)=\frac{1}{d}\sum_{p,q,r,s}A_{pq}B_{qr}C_{rs}D_{sp}\delta_{E_p+E_r,E_q+E_s},
\end{equation}
where $\delta$ is the Kronecker delta. Assumption \ref{generic} implies
\begin{equation} \label{temp}
{\rm OTOC}_\infty(A,B,C,D)=\frac{1}{d}\sum_{j,k}A_{jj}B_{jk}C_{kk}D_{kj}+\frac{1}{d}\sum_{j,k}A_{jk}B_{kk}C_{kj}D_{jj}-\frac{1}{d}\sum_jA_{jj}B_{jj}C_{jj}D_{jj}.
\end{equation}

Given a Hamiltonian $H$, there are multiple ways to write it as a sum of local terms: $H=\sum_iH_i$. Without loss of generality, we fix this ambiguity by expanding $H$ in the Pauli basis and assigning all Pauli string operators starting at site $i$ to $H_i$ (see Eq. (\ref{hastings}) for an example). This convention implies $\tr(H_jH_k)=0$ for $j\neq k$. Hence, $\langle H_i^2\rangle=\langle HH_i\rangle=\langle H^2\rangle/n$ for any $i$ due to translational invariance. Using this convention,

\begin{theorem}
Assumption \ref{generic} implies
\begin{equation} \label{thmeq}
{\rm OTOC}_\infty(H_1,H_i,H_1,H_i)=2\langle H_i^2\rangle^2/n+O(n^{-2}).
\end{equation}
\end{theorem}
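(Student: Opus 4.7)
The plan is to specialize the exact formula (\ref{temp}) to $A=C=H_1$, $B=D=H_i$ and evaluate the three resulting sums as traces directly, tightening the error beyond the generic $\tilde O(n^{-1.5})$ bound of Eq.\ (\ref{theory}). The key observation is that for Hamiltonian terms the diagonal matrix elements $(H_i)_{jj}$ are \emph{exactly} $E_j/n$ rather than only approximately so, which removes the step where ETH is invoked with a logarithmic-width window. To see this exact identity I would note that Assumption \ref{generic} forces the spectrum of $H$ to be non-degenerate, so each eigenstate $|j\rangle$ is a simultaneous eigenstate of every operator commuting with $H$; in particular, the translation operator $T$ acts as a phase on $|j\rangle$. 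Since $T^\dagger H_i T = H_{i\pm 1}$, this gives $\langle j|H_i|j\rangle=\langle j|H_k|j\rangle$ for all $i,k$, and summing over $k$ yields $(H_i)_{jj}=E_j/n$. Substituting this into (\ref{temp}) and using Hermiticity of $H_1,H_i$, the three terms collapse to
\begin{equation*}
S_1=S_2=\frac{1}{dn^2}\tr(HH_iHH_i),\qquad S_3=\frac{\langle H^4\rangle}{n^4},
\end{equation*}
and $\text{OTOC}(\infty)=S_1+S_2-S_3$.

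Next I would evaluate $\langle HH_iHH_i\rangle$ via the elementary identity
\begin{equation*}
\tr(HH_iHH_i)=\tr(H^2H_i^2)+\tfrac{1}{2}\tr([H,H_i]^2),
\end{equation*}
obtained by expanding $\tr([H,H_i]^2)$ and using cyclicity. Since $[H,H_i]=\sum_j[H_j,H_i]$ involves only $O(1)$ nonvanishing terms of bounded norm, $\langle[H,H_i]^2\rangle=O(1)$. For the main piece I would expand $H^2=\sum_{j,k}H_jH_k$ and observe that whenever the support of $H_jH_k$ is disjoint from that of $H_i^2$ one has $\langle H_jH_kH_i^2\rangle=\langle H_jH_k\rangle\langle H_i^2\rangle$ exactly; when $H_k$ alone is disjoint from both $H_j$ and $H_i$, the tracelessness of $H_k$ forces $\langle H_jH_kH_i^2\rangle=0$. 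Together with the convention $\langle H_jH_k\rangle=\delta_{jk}\langle H_i^2\rangle$, this leaves only the $O(1)$ pairs $(j,k)$ lying entirely within an $O(1)$ neighborhood of site $i$, each contributing $O(1)$. Hence $\langle H^2H_i^2\rangle=n\langle H_i^2\rangle^2+O(1)$ and so $S_1=\langle H_i^2\rangle^2/n+O(n^{-2})$.

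Combining with $S_3=\langle H^4\rangle/n^4=O(n^{-2})$ (from Eq.\ (\ref{gauss}) with $m=4$) gives the claim. I expect the main obstacle to be the exact identity $(H_i)_{jj}=E_j/n$: it requires not just translation invariance but the full symmetry argument from non-degeneracy, which must be carefully extracted from Assumption \ref{generic} and stated explicitly for the argument to be rigorous. The remaining steps amount to bookkeeping with locality, the one subtlety being to verify that after tracelessness of each $H_k$ has killed all contributions where $H_k$ is far from both $H_j$ and $H_i$, only $O(1)$ pairs of local terms survive and none of them reintroduces an error larger than $O(1)$ into $\langle H^2H_i^2\rangle$.
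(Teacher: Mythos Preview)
Your proposal is correct and matches the paper's second (direct) proof almost exactly: specialize Eq.~(\ref{temp}) using $(H_i)_{jj}=E_j/n$, reduce everything to $\tr(HH_iHH_i)/(dn^2)$ and $\langle H^4\rangle/n^4$, and then evaluate the trace by expanding $H$ into local terms and invoking tracelessness plus the orthogonality convention $\tr(H_jH_k)=0$ for $j\neq k$.

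Two minor differences are worth noting. First, you justify $(H_i)_{jj}=E_j/n$ via non-degeneracy (which indeed follows from Assumption~\ref{generic}) and the action of the translation operator; the paper simply asserts it ``due to translational invariance'' without spelling out that non-degeneracy is what guarantees each $|j\rangle$ is a translation eigenstate---your version is more careful here. Second, you route the trace computation through the identity $\tr(HH_iHH_i)=\tr(H^2H_i^2)+\tfrac{1}{2}\tr([H,H_i]^2)$, isolating the commutator piece (manifestly $O(1)$ since $[H,H_i]$ is local) before counting surviving terms in $\langle H^2H_i^2\rangle$; the paper instead expands $\langle HH_iHH_i\rangle=\sum_{j,k}\langle H_jH_iH_kH_i\rangle$ directly and counts the $n+\Theta(1)$ non-vanishing terms. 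Both bookkeeping schemes lead to the same estimate with the same $O(n^{-2})$ error, and neither requires the ETH-based $\tilde O(n^{-1.5})$ step.
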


\begin{proof}
We will use the observation that $(H_i)_{jj}=E_j/n$ for any $i$ due to translational invariance. For the present choice of local operators in OTOC, the first term on the right-hand side of Eq. (\ref{temp}) reads
\begin{multline} \label{2}
\frac{1}{d}\sum_{j,k=1}^d(H_1)_{jj}(H_i)_{jk}(H_1)_{kk}(H_i)_{kj}=\frac{1}{dn^2}\sum_{j,k=1}^dE_j\langle j|H_i|k\rangle E_k\langle k|H_i|j\rangle\\
=\frac{1}{dn^2}\tr\left(\sum_{j=1}^d|j\rangle E_j\langle j|H_i\sum_{k=1}^d|k\rangle E_k\langle k|H_i\right)=\frac{\tr(HH_iHH_i)}{dn^2}=\frac{1}{n^2}\sum_{j,k=1}^n\langle H_jH_iH_kH_i\rangle.
\end{multline}
In the last sum, there are $n^2$ terms, most of which are zero because $\tr H_j=\tr H_k=0$. Furthermore, the convention stated above implies $\tr(H_jH_k)=0$ for $j\neq k$. Hence, the number of nonvanishing terms in the last sum of Eq. (\ref{2}) is $n+O(1)$ ($n$ comes from the terms with $j=k$ and $O(1)$ accounts for the remainder). Equation (\ref{2}) equals
\begin{equation}
\frac{1}{n^2}\sum_{j=1}^n\langle H_jH_iH_jH_i\rangle+O(n^{-2})=\langle H_i^2\rangle^2/n+O(n^{-2})+O(n^{-2})=\langle H_i^2\rangle^2/n+O(n^{-2}).
\end{equation}
The second term on the right-hand side of Eq. (\ref{temp}) gives the same result. The last term on the right-hand side of Eq. (\ref{temp}) equals
\begin{equation}
\frac{1}{dn^4}\sum_jE_j^4=\Theta(n^{-2}),
\end{equation}
where we used Eq. (\ref{17}) with $m=4$. This completes the proof.
\end{proof}

\section{General case: Implications of eigenstate thermalization} \label{genericSpec}

In this section, we provide an argument for Eqs. (\ref{spec}), (\ref{theory}). The argument is rigorous assuming a generic spectrum and ETH.

\begin{lemma} [moments]
\begin{align}
&\frac{1}{d}\sum_jE_j^m=\langle H^m\rangle=\Theta(n^{m/2}),\quad\forall~{\rm even~positive~integer}~m, \label{17}\\
&\left|\frac{1}{d}\sum_jE_j^3\right|=|\langle H^3\rangle|=O(n). \label{18}
\end{align}
\end{lemma}

\begin{proof}
Expanding $H$ in the Pauli basis, it suffices to count the number of terms that do not vanish upon taking the trace in the expansion of $H^m$ or $H^3$.
\end{proof}

\begin{lemma} [concentration of eigenvalues] \label{Mar}
Almost all eigenstates have zero energy density:
\begin{equation}
|\{j:|E_j|\ge n^{0.51}\}|/d\le O(n^{-0.01m}),\quad\forall m>0.
\end{equation}
\end{lemma}

\begin{proof}
It follows from Eq. (\ref{17}) and Markov's inequality.
\end{proof}

This lemma allows us to upper bound the total contribution of all eigenstates away from the middle of the spectrum, e.g.,
\begin{equation} \label{tail}
\frac{1}{d}\sum_{j:|E_j|\ge n^{0.51}}E_j^2\le O(n^{2-0.01m}),\quad\forall m>0.
\end{equation}

Lemmas \ref{Mar} and Eqs. (\ref{17}), (\ref{tail}) are related to the fact that $E_j$'s approach a normal distribution in the thermodynamic limit $n\to\infty$ \cite{KLW15, BC15}. Indeed, $|E_j|=\Theta(\sqrt n)$ for almost all $j$.

It suffices to assume ETH for eigenstates in the middle of the spectrum.

\begin{assumption} [eigenstate thermalization hypothesis in the middle of the spectrum] \label{ethasmp}
Let $\delta$ be an arbitrarily small positive constant. For any local operator $X$ with $\|X\|\le1$, there is a function $f_X:[-\delta,\delta]\to[-1,1]$ such that
\begin{equation} \label{asmpeq}
|X_{jj}-f_X(E_j/n)|\le1/\poly n
\end{equation}
for all $j$ with $|E_j|\le\delta n$, where $\poly n$ denotes a polynomial of sufficiently high degree in $n$. We assume that $f_X$ is smooth in the sense of having a Taylor expansion to some low order.
\end{assumption}
It was proposed analytically \cite{Sre99} and supported by numerical simulations \cite{KIH14} that the right-hand side of Eq. (\ref{asmpeq}) can be improved to $e^{-\Omega(n)}$. For our purposes, however, a (much weaker) inverse polynomial upper bound suffices.

\begin{lemma}
For any local operator $X$ and traceless local operator $A$, Assumption \ref{ethasmp} implies
\begin{align}
&f_X(0)=\frac{1}{d}\tr X, \label{zero}\\
&f'_A(0)=\tr(HA)/\tr(HH_i), \label{deriv}\\
&\frac{1}{d}\sum_j|A_{jj}|^2=\frac{|\tr(HA)|^2}{dn\tr(HH_i)}+O(n^{-2}), \label{exp}\\
&\frac{1}{d}\sum_j|A_{jj}|^4=O(n^{-2}). \label{exp2}
\end{align}
\end{lemma}

For a generic traceless local operator $A$, the right-hand side of Eq. (\ref{deriv}) (the normalized overlap between $A$ and the Hamiltonian) is finite and the first term on the right-hand side of Eq. (\ref{exp}) is $\Theta(1/n)$.

\begin{proof} [Proof of Eq. (\ref{zero})]
\begin{equation} \label{zeroc}
\frac{1}{d}\tr X=\frac{1}{d}\sum_jX_{jj}\approx\frac{1}{d}\sum_{j:|E_j|<n^{0.51}}X_{jj}\approx\frac{1}{d}\sum_{j:|E_j|<n^{0.51}}f_X(0)\approx\frac{1}{d}\sum_jf_X(0)=f_X(0),
\end{equation}
where we used Lemma \ref{Mar} in the second and fourth steps. The third step follows from the continuity of $f_X(x)$ at $x=0$. Taking the limit $n\to\infty$, all errors in Eq. (\ref{zeroc}) vanish and thus we obtain Eq. (\ref{zero}). In particular, $f_A(0)=0$ for any traceless local operator $A$.
\end{proof}
\begin{proof} [Proof of Eq. (\ref{deriv})]
\begin{multline} \label{derivc}
\frac{1}{d}\tr(HA)=\frac{1}{d}\sum_jE_jA_{jj}\approx\frac{1}{d}\sum_{j:|E_j|<n^{0.51}}E_jA_{jj}\approx\frac{1}{d}\sum_{j:|E_j|<n^{0.51}}\frac{E_j^2}{n}f_A'(0)\approx\frac{1}{d}\sum_j\frac{E_j^2}{n}f_A'(0)\\
=\tr(HH_i)f_A'(0)/d,
\end{multline}
where we used Lemma \ref{Mar} and Eq. (\ref{tail}) in the second and fourth steps, respectively. In the third step, we used Eq. (\ref{asmpeq}) and the Taylor expansion
\begin{equation} \label{taylor}
f_A(E_j/n)=f_A(0)+f_A'(0)E_j/n+0.5f_A''(0)E_j^2/n^2+O(|E_j|^3/n^3)
\end{equation}
so that the approximation error in this step is upper bounded by
\begin{equation}
\frac{O(1)}{d}\sum_{j:|E_j|<n^{0.51}}\frac{|E_j|^3}{n^2}\le O(n^{-0.47}).
\end{equation}
Taking the limit $n\to\infty$, all errors in Eq. (\ref{derivc}) vanish and thus we obtain Eq. (\ref{deriv}).
\end{proof}
\begin{proof} [Proof of Eq. (\ref{exp})]
\begin{equation}
\frac{1}{d}\sum_j|A_{jj}|^2\approx\frac{1}{d}\sum_{j:|E_j|<n^{0.51}}|A_{jj}|^2\approx\frac{|f'_A(0)|^2}{d}\sum_{j:|E_j|<n^{0.51}}\frac{E_j^2}{n^2}\approx\frac{|f'_A(0)|^2}{d}\sum_j\frac{E_j^2}{n^2}=\frac{|\tr(HA)|^2}{dn\tr(HH_i)},
\end{equation}
where we used Lemma \ref{Mar} and Eqs. (\ref{tail}), (\ref{deriv}) in the first, third, and last steps, respectively. In the second step, we used Eqs. (\ref{asmpeq}), (\ref{taylor}) with the approximation error upper bounded by
\begin{multline} \label{ap4}
\frac{O(1)}{d}\left|\sum_{j:|E_j|<n^{0.51}}\frac{E_j^3}{n^3}\right|+\frac{O(1)}{d}\sum_{j:|E_j|<n^{0.51}}\frac{E_j^4}{n^4}+1/\poly n\approx\frac{O(1)}{d}\left|\sum_j\frac{E_j^3}{n^3}\right|+\frac{O(1)}{d}\sum_j\frac{E_j^4}{n^4}\\
=O(n^{-2})+O(n^{-2})=O(n^{-2}),
\end{multline}
where we used Eqs. (\ref{17}), (\ref{18}).
\end{proof}
\begin{proof} [Proof of Eq. (\ref{exp2})]
\begin{equation}
\frac{1}{d}\sum_j|A_{jj}|^4\approx\frac{1}{d}\sum_{j:|E_j|<n^{0.51}}|A_{jj}|^4\approx\frac{O(1)}{d}\sum_{j:|E_j|<n^{0.51}}\frac{E_j^4}{n^4}\approx\frac{O(1)}{d}\sum_j\frac{E_j^4}{n^4}=O(n^{-2}).
\end{equation}
\end{proof}

Let $J\subseteq\mathbb R$ be an energy interval. Define
\begin{equation}
P_J=\sum_{j:E_j\in J}|j\rangle\langle j|
\end{equation}
as the projector onto $J$.

\begin{lemma} [\cite{AKL16}] \label{trash}
Let $\epsilon<\epsilon'$. For any local operator $X$,
\begin{equation} \label{PXP}
\|P_{(-\infty,\epsilon)}XP_{(\epsilon',\infty)}\|\le\|X\|e^{-\Omega(\epsilon'-\epsilon)}.
\end{equation}
\end{lemma}

This lemma states that local operators cannot (up to an exponentially small error) connect projectors that are far away from each other in the spectrum.

\begin{proof} [Justification of Eq. (\ref{spec})]
Let $c$ be a sufficiently large constant. Consider the first term on the right-hand side of Eq. (\ref{temp}):
\begin{multline} \label{eth}
\frac{1}{d}\sum_{j,k}A_{jj}B_{jk}C_{kk}D_{kj}\approx\frac{1}{d}\sum_j\sum_{k:|E_j-E_k|<c\ln n}A_{jj}B_{jk}C_{kk}D_{kj}\approx\frac{1}{d}\sum_j\sum_{k:|E_j-E_k|<c\ln n}A_{jj}B_{jk}C_{jj}D_{kj}\\
\approx\frac{1}{d}\sum_{j,k}A_{jj}C_{jj}B_{jk}D_{kj}=\frac{1}{d}\sum_jA_{jj}C_{jj}(BD)_{jj},
\end{multline}
where we used Lemma \ref{trash} in the first and third steps: Due to the presence of off-diagonal matrix elements $B_{jk},D_{kj}$, the total contribution of all terms with $|E_j-E_k|\ge c\ln n$ is upper bounded by $1/\poly n$. In the second step of Eq. (\ref{eth}), we replace $C_{kk}$ by $C_{jj}$ using ETH (Assumption \ref{ethasmp}), which states that eigenstates with similar energies have similar local expectation values. A detailed and rigorous error analysis for Eq. (\ref{eth}) is given in Propositions \ref{p1}, \ref{p2} below.

Equation (\ref{eth}) shows that the first term on the right-hand side of Eq. (\ref{temp}) corresponds to the second term on the right-hand side of Eq. (\ref{eigenconn}). Similarly, the second term on the right-hand side of Eq. (\ref{temp}) corresponds to the first term on the right-hand side of Eq. (\ref{eigenconn}). Obviously, the last terms on the right-hand sides of Eqs. (\ref{eigenconn}), (\ref{temp}) are the same. Thus, we obtain Eq. (\ref{spec}).
\end{proof}

\begin{proposition} \label{p1}
The approximation errors in the first and third steps of Eq. (\ref{eth}) are $1/\poly n$, where $\poly n$ denotes a polynomial of sufficiently high degree in $n$.
\end{proposition}

\begin{proof}
Let
\begin{equation}
Q_j=\sum_{k:|E_j-E_k|\ge c\ln n}|k\rangle\langle k|,\quad\tilde C=\sum_kC_{kk}|k\rangle\langle k|.
\end{equation}
Since $\tilde C$ is the diagonal part of $C$ (in the energy eigenbasis), it is easy to see $\|\tilde C\|\le\|C\|$. The approximation error in the first step of Eq. (\ref{eth}) is
\begin{multline}
\frac{1}{d}\left|\sum_j\sum_{k:|E_j-E_k|\ge c\ln n}A_{jj}B_{jk}C_{kk}D_{kj}\right|\le\frac{1}{d}\sum_j|A_{jj}|\left|\sum_{k:|E_j-E_k|\ge c\ln n}B_{jk}C_{kk}D_{kj}\right|\\
\le\frac{\|A\|}{d}\sum_j|\langle j|BQ_j\tilde CQ_jD|j\rangle|\le\frac{\|A\|}{d}\sum_j\|Q_jB^\dag|j\rangle\|\|\tilde C\|\|Q_jD|j\rangle\|\le\|A\|\|B\|\|C\|\|D\|/\poly n,
\end{multline}
where we used Lemma \ref{trash}. The approximation error in the third step of Eq. (\ref{eth}) can be upper bounded similarly.
\end{proof}

\begin{proposition} \label{p2}
The approximation error in the second step of Eq. (\ref{eth}) is $\tilde O(n^{-1.5})$.
\end{proposition}

\begin{proof}
Let $n$ be sufficiently large such that $n^{0.51}+c\ln n<\delta n$, and define
\begin{equation}
\tilde C^{(j)}:=\sum_{k:|E_j-E_k|<c\ln n}(C_{jj}-C_{kk})|k\rangle\langle k|.
\end{equation}
For $j,k$ such that $|E_j|<n^{0.51}$ and $|E_j-E_k|<c\ln n$, Assumption \ref{ethasmp} implies
\begin{equation}
|C_{jj}-C_{kk}|\le|f_C(E_j/n)-f_C(E_k/n)|+1/\poly n=O(|E_j-E_k|)/n+1/\poly n.
\end{equation}
Hence, $\|\tilde C^{(j)}\|=\tilde O(1/n)$ for any $j$ such that $|E_j|<n^{0.51}$. The approximation error in the second step of Eq. (\ref{eth}) is
\begin{align}
&\frac{1}{d}\left|\sum_j\sum_{k:|E_j-E_k|<c\ln n}A_{jj}B_{jk}(C_{jj}-C_{kk})D_{kj}\right|\le\frac{1}{d}\sum_j|A_{jj}|\left|\sum_{k:|E_j-E_k|<c\ln n}B_{jk}\tilde C^{(j)}_{kk}D_{kj}\right|\nonumber\\
&=\frac{1}{d}\sum_j|A_{jj}||\langle j|B\tilde C^{(j)}D|j\rangle|\le\frac{1}{d}\sum_j|A_{jj}|\left\|\tilde C^{(j)}\right\|=\frac{1}{d}\sum_{j:|E_j|<n^{0.51}}|A_{jj}|\left\|\tilde C^{(j)}\right\|\nonumber\\
&+\frac{1}{d}\sum_{j:|E_j|\ge n^{0.51}}|A_{jj}|\left\|\tilde C^{(j)}\right\|\le\frac{1}{d}\sum_{j:|E_j|<n^{0.51}}|A_{jj}|\tilde O(1/n)+\frac{1}{d}\sum_{j:|E_j|\ge n^{0.51}}|A_{jj}|O(1)\nonumber\\
&\le\frac{\tilde O(1/n)}{d}\sum_j|A_{jj}|+\frac{1}{d}\sum_{j:|E_j|\ge n^{0.51}}O(1)\le\tilde O(1/n)\sqrt{\frac{1}{d}\sum_j|A_{jj}|^2}+1/\poly n=\tilde O(n^{-1.5}),
\end{align}
where we used Eq. (\ref{exp}) in the last step.
\end{proof}

\begin{proof} [Justification of Eq. (\ref{theory})]
Specializing to $\langle AB(t)A^\dag B^\dag(t)\rangle$, the derivation above yields
\begin{equation} \label{abab}
{\rm OTOC}_\infty(A,B,A^\dag,B^\dag)=\frac{1}{d}\sum_j(AA^\dag)_{jj}|B_{jj}|^2+|A_{jj}|^2(BB^\dag)_{jj}-|A_{jj}B_{jj}|^2+\tilde O(n^{-1.5}).
\end{equation}
Consider the first term on the right-hand side:
\begin{multline} \label{est}
\frac{1}{d}\sum_j(AA^\dag)_{jj}|B_{jj}|^2\approx\frac{1}{d}\sum_{j:|E_j|<n^{0.51}}(AA^\dag)_{jj}|B_{jj}|^2\approx\frac{1}{d}\sum_{j:|E_j|<n^{0.51}}f_{AA^\dag}(0)|B_{jj}|^2\\
\approx\frac{f_{AA^\dag}(0)}{d}\sum_j|B_{jj}|^2\approx\frac{\tr(AA^\dag)|\tr(HB)|^2}{d^2n\tr(HH_i)},
\end{multline}
where we used Lemma \ref{Mar} in the first and third steps; the continuity of $f_{AA^\dag}(x)$ at $x=0$ in the second step; Eqs. (\ref{zero}), (\ref{exp}) in the last step. A rigorous error analysis for Eq. (\ref{est}) is given in Proposition \ref{p3} below.

The second term on the right-hand side of Eq. (\ref{abab}) can be estimated similarly. The third term on the right-hand side of Eq. (\ref{abab}) is
\begin{equation}
\frac{1}{d}\sum_j|A_{jj}B_{jj}|^2\le\frac{1}{2d}\sum_j|A_{jj}|^4+|B_{jj}|^4=O(n^{-2}),
\end{equation}
where we used Eq. (\ref{exp2}). Thus, Eq. (\ref{theory}) is proved based on Assumptions \ref{generic}, \ref{ethasmp}.
\end{proof}

\begin{proposition} \label{p3}
The error in Eq. (\ref{est}) is $O(n^{-2})$.
\end{proposition}

\begin{proof}
The approximation error in the last step of Eq. (\ref{est}) is $O(n^{-2})$ as given by Eq. (\ref{exp}). Using the Taylor expansion of $f_{AA^\dag}(x)$ at $x=0$, we estimate the approximation error in the second step of Eq. (\ref{est}):
\begin{align}
&\frac{O(1)}{d}\left|\sum_{j:|E_j|<n^{0.51}}\frac{|B_{jj}|^2E_j}{n}\right|+\frac{O(1)}{d}\sum_{j:|E_j|<n^{0.51}}\frac{|B_{jj}|^2E_j^2}{n^2}+1/\poly n\lesssim\frac{O(1)}{d}\left|\sum_{j:|E_j|<n^{0.51}}\frac{E_j^3}{n^3}\right|\nonumber\\
&+\frac{O(1)}{d}\sum_{j:|E_j|<n^{0.51}}\left(\frac{E_j^4}{n^4}+|B_{jj}|^4\right)\approx\frac{O(1)}{d}\left|\sum_j\frac{E_j^3}{n^3}\right|+\frac{O(1)}{d}\sum_j\left(\frac{E_j^4}{n^4}+|B_{jj}|^4\right)\nonumber\\
&\approx O(n^{-2})+O(n^{-2})+O(n^{-2})=O(n^{-2}),
\end{align}
where we used the Taylor expansion of $f_B(x)$ at $x=0$ and the inequality of arithmetic and geometric means in the first step; Lemma \ref{Mar} in the second step; Eqs. (\ref{17}), (\ref{18}), (\ref{exp2}) in the third step.
\end{proof}

\section{Chaotic dynamics as random unitary} \label{randomU}

In this section, we rederive Eq. (\ref{spec}) using techniques from the theory of random unitaries. The derivation is not rigorous, but provides a heuristic picture showing the extent to which chaotic dynamics can be approximated by a random unitary.

To improve the approximation described by Eq. (\ref{At}), we first take into account the unitarity of the dynamics. In strongly chaotic systems, it is tempting to expect
\begin{assumption} \label{Kitaev}
The time evolution operator $e^{-iHt}$ for large $t$ behaves like a random unitary.
\end{assumption}
Based on this assumption, late-time OTOC can be estimated from
\begin{equation}
{\rm OTOC}_\infty(A,B,C,D)=\int\mathrm dU\langle A(U^\dag BU)C(U^\dag DU)\rangle,
\end{equation}
where $U$ is taken from the unitary group $\mathcal U(d)$ with respect to the Haar measure.

\begin{lemma} [\cite{Kit16, RY16}] \label{Haar}
\begin{equation} \label{scramble}
\int\mathrm dU\langle AU^\dag BUCU^\dag DU\rangle=\langle A,B,C,D\rangle-\frac{\langle AC\rangle_c\langle BD\rangle_c}{d^2-1},
\end{equation}
where $\langle XY\rangle_c:=\langle XY\rangle-\langle X\rangle\langle Y\rangle$ is the connected correlator and
\begin{equation} \label{connected}
\langle A,B,C,D\rangle:=\langle AC\rangle\langle B\rangle\langle D\rangle+\langle A\rangle\langle C\rangle\langle BD\rangle-\langle A\rangle\langle B\rangle\langle C\rangle\langle D\rangle.
\end{equation}
\end{lemma}

Note that the right-hand side of Eq. (\ref{eigenconn}) resembles that of Eq. (\ref{connected}) in the sense of replacing every $\langle\cdots\rangle$ (expectation value at infinite temperature) by $\langle j|\cdots|j\rangle$ (expectation value in an eigenstate).

\begin{corollary} [\cite{Kit16, RY16}] \label{oto}
Assumption \ref{Kitaev} and Lemma \ref{Haar} imply
\begin{equation} \label{full}
{\rm OTOC}_\infty(A,B,C,D)=\langle A,B,C,D\rangle-\frac{\langle AC\rangle_c\langle BD\rangle_c}{d^2-1}.
\end{equation}
Therefore, 
\begin{itemize}
\item ${\rm OTOC}_\infty(A,B,A^\dag,B^\dag)$ for traceless operators $A,B$ vanishes in the thermodynamic limit $n\to\infty$. 
\item In finite-size systems, the saturation value of OTOC $\langle AB(t)A^\dag B^\dag(t)\rangle$ is exponentially small in the system size (because $d=2^n$).
\end{itemize}
\end{corollary}

The approximation stated in Assumption \ref{Kitaev} is still too crude. We propose a refined version of Assumption \ref{Kitaev} by incorporating energy conservation and argue (nonrigorously) that Eq. (\ref{spec}) follows from this refinement. 

We observe that the time evolution conserves energy and that local operators can only additively change the energy of a state by $O(1)$ (Lemma \ref{trash}). Thus, the action of OTOC $AB(t)CD(t)$ is approximately restricted to each microcanonical ensemble. This observation motivates a refinement of Assumption \ref{Kitaev} in strongly chaotic systems:

\begin{assumption} \label{Huang}
The time evolution operator $e^{-iHt}$ for large $t$ behaves like a random unitary in each microcanonical ensemble.
\end{assumption}

Conceptually, this assumption is related to the so-called random diagonal unitaries \cite{NTM12, NHKW17}.

Based on Assumption \ref{Huang}, we argue for Eq. (\ref{spec}). Since the bandwidth of $H$ is $\Theta(n)$, we decompose the energy spectrum into a disjoint union of $\Theta(n/\Delta)$ microcanonical ensembles with bandwidth $\Delta$. Let $J_k:=[k\Delta,(k+1)\Delta)$ and define $[A,B,C,D]_k$ as the right-hand side of Eq. (\ref{connected}) with every $\langle\cdots\rangle$ replaced by the expectation value $\tr(P_{J_k}\cdots)/\tr P_{J_k}$ in the microcanonical ensemble. We expect
\begin{equation} \label{chain}
\lim_{\tau\to\infty}\frac{1}{\tau}\int_0^\tau\mathrm dt\frac{\tr(P_{J_k}AB(t)CD(t))}{\tr P_{J_k}}\approx[A,B,C,D]_k\approx\frac{1}{\tr P_{J_k}}\sum_{j:E_j\in J_k}\langle A,B,C,D\rangle_j.
\end{equation}
The first step is a consequence of Lemma \ref{Haar} and Assumption \ref{Huang}. Indeed, it is just Eq. (\ref{full}) restricted to the microcanonical ensemble $P_{J_k}$. The last step of Eq. (\ref{chain}) used ETH. Equation (\ref{spec}) follows immediately from Eq. (\ref{chain}).

An important subtlety here, which does not appear in the derivation of Eq. (\ref{full}), requires further explanation. For an eigenstate $|j\rangle$ in a microcanonical ensemble $P_{J_k}$, the state $AB(t)CD(t)|j\rangle$ may not be completely in the microcanonical ensemble. As long as $A,B,C,D$ are local operators, Lemma \ref{trash} implies
\begin{equation} \label{trunc}
\|(1-P_{J_k})AB(t)CD(t)|j\rangle\|\le\|A\|\|B\|\|C\|\|D\|e^{-\Omega(\min\{E_j-k\Delta,(k+1)\Delta-E_j\})},
\end{equation}
i.e., the ``leakage'' out of the microcanonical ensemble $P_{J_k}$ is exponentially small. This is why Eq. (\ref{spec}) requires the locality of $A,B,C,D$, although Corollary \ref{oto} does not.

\section{Numerics} \label{numerics}

In this section, we support Eq. (\ref{theory}) with numerical simulations. Consider the spin-$1/2$ chain
\begin{equation} \label{hastings}
H=\sum_{i=1}^nH_i,\quad H_i=\sigma_i^z\sigma_{i+1}^z-1.05\sigma_i^x+0.5\sigma_i^z+g\sigma_i^y\sigma_{i+1}^z
\end{equation}
with periodic boundary conditions ($\sigma_{n+1}^z:=\sigma_1^z$), where $\sigma_i^x,\sigma_i^y,\sigma_i^z$ are the Pauli matrices at site $i$. For $g=0$, this model is nonintegrable in the sense of Wigner-Dyson level statistics \cite{BCH11, KH13}. Reference \cite{RSS15} calculated OTOC, focusing on the butterfly effect rather than the late-time behavior. Note that for $g=0$, most energy levels are twofold degenerate so that Assumption \ref{generic} does not hold.

We fix $g=0.1$. Intuitively, the model is nonintegrable for any value of $g$. We have numerically confirmed the validity of Assumption \ref{generic} for $n=5,6,\ldots,12$. Presumably, Assumption \ref{generic} holds for any integer $n\ge5$. Let
\begin{equation}
F_n^x:={\rm OTOC}_\infty(\sigma_1^x,\sigma_i^x,\sigma_1^x,\sigma_i^x),\quad F_n^z:={\rm OTOC}_\infty(\sigma_1^z,\sigma_i^z,\sigma_1^z,\sigma_i^z).
\end{equation}
Note that the values of $F_n^x,F_n^z$ are independent of $i$. We compute $F_n^x,F_n^z$ using exact diagonalization. The results are shown in the left panel of Fig. \ref{plot}.

\begin{figure}
\includegraphics[width=0.5\linewidth]{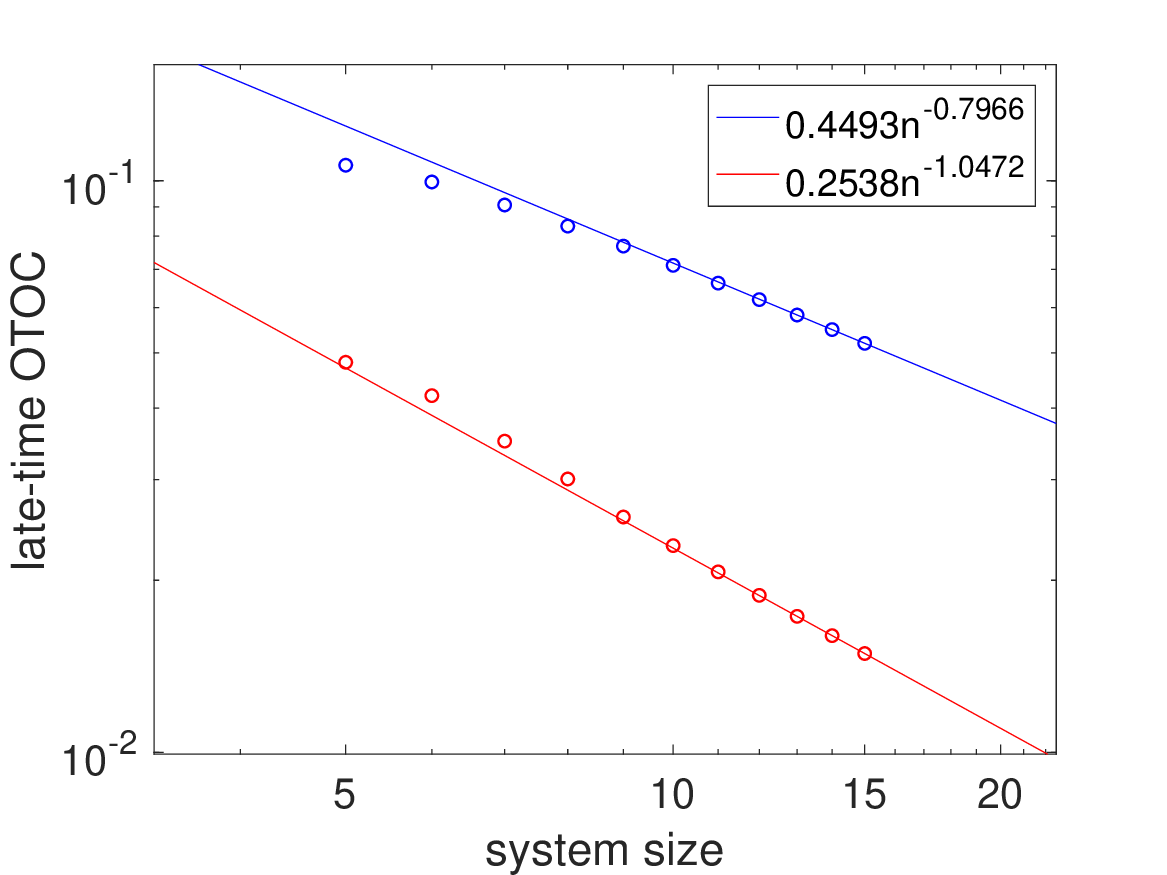}
\includegraphics[width=0.5\linewidth]{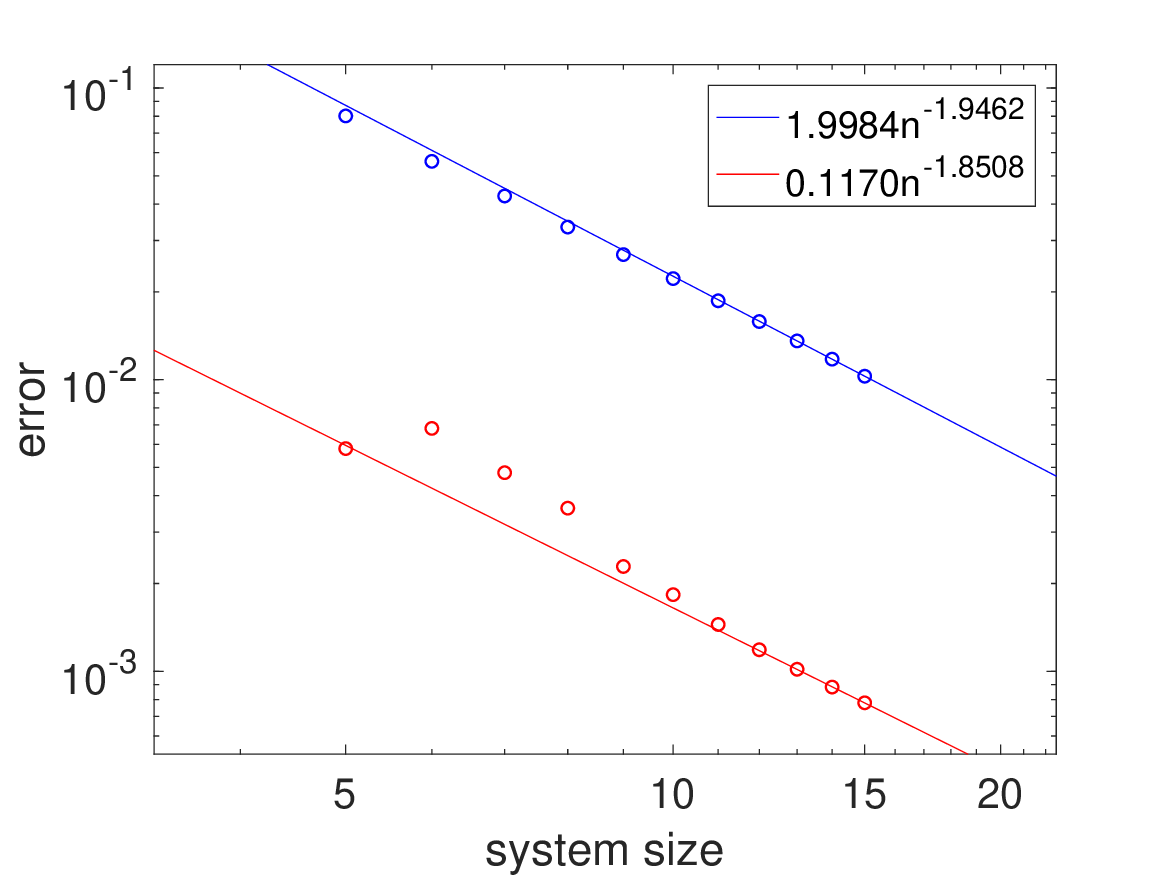}
\caption{Left panel: Finite-size scaling of late-time OTOC $F_n^x$ (blue), $F_n^z$ (red) for $n=5,6,\ldots,15$. The lines are power-law fits $0.4493n^{-0.7966}$ (blue), $0.2538n^{-1.0472}$ (red) to the last few data points. Right panel: Finite-size scaling of the errors $|F_n^x-G_n^x|$ (blue), $|F_n^z-G_n^z|$ (red) for $n=5,6,\ldots,15$. The lines are power-law fits $1.9984n^{-1.9462}$ (blue), $0.1170n^{-1.8508}$ (red) to the last few data points.} \label{plot}
\end{figure}

The leading terms in the finite-size scaling of $F_n^x,F_n^z$ are calculated analytically from Eq. (\ref{theory}):
\begin{equation}
G_n^x:=\frac{14}{15n}\approx\frac{0.9333}{n},\quad G_n^z:=\frac{40}{189n}\approx\frac{0.2116}{n}.
\end{equation}
We expect that the noticeable differences between $G_n^x,G_n^z$ and the power-law fits to $F_n^x,F_n^z$ are due to finite-size effects. To justify this claim, we perform a scaling analysis of the errors $|F_n^x-G_n^x|$, $|F_n^z-G_n^z|$ in the right panel of Fig. \ref{plot}. The numerics suggest that the errors should vanish as $\Theta(n^{-2})$ in the thermodynamic limit $n\to\infty$.

\section{Conclusion}

We propose that in order to better approximate the late-time behavior of chaotic dynamics generated by a time-independent Hamiltonian, one needs to take into account energy conservation. In particular, we show that approximation schemes with and without energy conservation make different predictions about OTOC at late times: without energy conservation, late-time OTOC scales inverse exponentially with system size; with energy conservation, the scaling is inverse polynomial. The latter prediction has been rigorously confirmed based on two very mild assumptions and is consistent with numerical simulations of a nonintegrable spin chain. 

An immediate open question is how good the energy-preserving approximation scheme proposed in this paper is in predicting the late-time behavior of higher-order time-ordered or out-of-time-ordered correlators. A more general problem for future study is how to approximate the time evolution process and capture other universal features of chaotic dynamics. See Refs. \cite{KVH18, RPv18, RPv19, Hua19} for recent progress in this direction.

\section*{Acknowledgments}

We would like to thank Xie Chen, Yingfei Gu, Nick Hunter-Jones, Alexei Y. Kitaev, and Douglas Stanford for helpful discussions. We are especially grateful to X.C., who wrote a substantial portion of this paper. We acknowledge funding provided by the Institute for Quantum Information and Matter, an NSF Physics Frontiers Center (NSF Grant PHY-1733907). Additional funding support was provided by NSF DMR-1654340 (Y.H., Y.-L.Z.).

\bibliographystyle{abbrv}
\bibliography{otoc}

\end{document}